\documentclass[final,5p,times,twocolumn]{elsarticle}

\usepackage[T1]{fontenc}
\usepackage{amsmath}
\usepackage{amssymb}
\usepackage{amsthm}
\usepackage[hidelinks]{hyperref}

\biboptions{sort&compress}

\newtheorem{theorem}{Theorem}
\newtheorem{lemma}[theorem]{Lemma}
\newtheorem{proposition}[theorem]{Proposition}
\newtheorem{corollary}[theorem]{Corollary}

\newcommand{\sgn}{\operatorname{sgn}}
\newcommand{\Q}{\mathcal{Q}}
\newcommand{\R}{\mathbb{R}}
\newcommand{\Z}{\mathbb{Z}}
\newcommand{\Sn}{S_n}

\newcounter{iplbibitem}
\pretocmd{\bibitem}{%
  \stepcounter{iplbibitem}%
  \ifnumcomp{\value{iplbibitem}}{=}{11}{\newpage}{}%
}{}{}

\journal{Information Processing Letters}

\begin{document}

\begin{frontmatter}

\title{Polynomial-Time Singular Witnesses for Non-SNS Sign Patterns}

\author[aff1,aff2]{Tao Jiang}
\ead{jiangt@ios.ac.cn}

\author[aff1,aff2]{Minbo Gao}
\ead{gaomb@ios.ac.cn}

\author[aff1,aff2]{Shaowei Cai\texorpdfstring{\corref{cor1}}{}}
\ead{caisw@ios.ac.cn}
\cortext[cor1]{Corresponding author}

\affiliation[aff1]{
  organization={Key Laboratory of System Software (Chinese Academy of
  Sciences) and State Key Laboratory of Computer Science, Institute of
  Software, Chinese Academy of Sciences},
  addressline={No. 4 South Fourth Street, Zhongguancun, Haidian District},
  city={Beijing},
  postcode={100190},
  country={China}}

\affiliation[aff2]{
  organization={School of Computer Science and Technology, University of
  Chinese Academy of Sciences},
  addressline={No. 19A Yuquan Road, Shijingshan District},
  city={Beijing},
  postcode={100049},
  country={China}}

\begin{abstract}
Sign-nonsingularity asks whether every real matrix with prescribed entry signs
is nonsingular. Polynomial-time algorithms recognize square sign-nonsingular
patterns through their connection with even directed cycles, but recognition
does not itself produce an exact numerical witness in the negative case. We
give a deterministic polynomial-time algorithm that, for any square sign
pattern $A$, either reports that $A$ is sign-nonsingular or outputs
$B\in\Z^{n\times n}$ and $z\in\Z^n\setminus\{0\}$ such that
$\sgn(B)=A$ and $Bz=0$. After normalizing a perfect matching, an even directed
cycle yields two determinant terms of opposite signs. Making either term
dominant produces endpoint realizations with opposite determinant signs.
Changing their magnitudes one coordinate at a time exposes an affine
sign-changing step, whose zero is rational; clearing its denominator gives
the integer witness. Entries of $B$ have $O(n^2\log n)$ bits, and entries of
$z$ have $O(n^3\log n)$ bits. The result settles Conjecture 14.12.4 in the
\emph{Handbook of Satisfiability}.
\end{abstract}

\begin{keyword}
sign-nonsingular matrix \sep sign pattern \sep singular realization \sep
even directed cycle \sep polynomial-time algorithm
\end{keyword}

\end{frontmatter}

\section{Introduction}
\label{sec:introduction}

For a sign pattern $A\in\{-1,0,+1\}^{n\times n}$, its qualitative class is
\[
  \Q(A)=\{B\in\R^{n\times n}:\sgn(B_{ij})=A_{ij}
  \text{ for all }i,j\}.
\]
The pattern is \emph{sign-nonsingular} (SNS) if every matrix in $\Q(A)$ is
nonsingular. Sign patterns arose in qualitative economic reasoning and the
study of sign-solvable linear systems
\cite{BassettMaybeeQuirk1968,KleeLadnerManber1984,BrualdiShader1995}.
For square patterns, SNS is also tied to P\'olya's permanent problem, Pfaffian
orientations, and even directed cycles
\cite{Polya1913,Kasteleyn1967,Little1975,VaziraniYannakakis1989}.

Structural work on these equivalent problems culminated in polynomial-time
recognition algorithms
\cite{SeymourThomassen1987,Thomassen1992,RobertsonSeymourThomas1999,
McCuaig2004}. These results decide whether a pattern is SNS and describe
obstructions when it is not. The functional question considered here asks for
a different output: from a non-SNS pattern, construct an exactly represented
singular member of its qualitative class.

The 2021 \emph{Handbook of Satisfiability} posed this functional witness
problem as Conjecture 14.12.4 \cite{KleineBuningKullmann2021}. To the best of
our knowledge, it has remained open. We resolve the conjecture for square sign
patterns by giving an explicit polynomial-time construction. Its motivation
comes from the connection between square SNS patterns, complement-invariant
satisfiability, deficiency, and balanced autarkies
\cite{FleischnerKullmannSzeider2002,Kullmann2003,Kullmann2007}. The precise
gap between the known recognition and equivalence results and the functional
witness problem is discussed in Section~\ref{sec:relation}.

Our construction first extracts two opposite-sign determinant terms using the
known even-directed-cycle machinery. Assigning a large magnitude to the
entries of either term gives two integer realizations whose determinants have
opposite signs. A direct segment between them is unsuitable because its
determinant has degree up to $n$. Instead, we change one entry magnitude at a
time. On each such step the determinant is affine, so a sign-changing step has
an exact rational zero. Clearing its single denominator preserves the sign
pattern and gives an integer matrix. Exact elimination then supplies a short
integer null vector.

\begin{theorem}
\label{thm:main}
There is a deterministic polynomial-time algorithm that, given
$A\in\{-1,0,+1\}^{n\times n}$, either reports that $A$ is SNS or outputs
\[
  B\in\Z^{n\times n},\qquad z\in\Z^n\setminus\{0\},
  \qquad \sgn(B)=A,\qquad Bz=0.
\]
In the latter case, entries of $B$ have $O(n^2\log n)$ bits and entries of
$z$ have $O(n^3\log n)$ bits.
\end{theorem}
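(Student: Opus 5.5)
The plan follows the outline sketched in the introduction; the main tool is the classical characterization of square SNS patterns. First, use bipartite matching to test whether the bipartite graph of $A$ has a perfect matching $M$. If it does not, every term of the determinant expansion vanishes, so every $B\in\Q(A)$ is singular. In that case we output $B=A$ itself (entries in $\{-1,0,1\}$). A null vector $z$ is then obtained by exact fraction-free Gaussian elimination on $A$ followed by clearing denominators; by Hadamard's bound its entries have $O(n\log n)$ bits.

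If $M$ exists, permute the columns so that $M$ is the diagonal, and multiply rows by $\pm1$ so that every diagonal entry is $-1$. Form the digraph $D$ on $[n]$ with an arc $i\to j$ for each nonzero off-diagonal entry, with arc sign $A_{ij}$. By the classical reduction (Bassett--Maybee--Quirk; Klee--Ladner--Manber), the normalized $A$ is SNS iff every directed cycle of $D$ has product of arc signs equal to $-1$. Reversing the sign of every negative arc by a standard gadget (subdividing with a new vertex) turns this into the even directed cycle problem. That problem is solvable in polynomial time by Robertson--Seymour--Thomas / McCuaig, and the algorithm can be made to output such a cycle $C$: one finds it by self-reduction, deleting arcs while the answer stays positive. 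If no such cycle exists, we report SNS.

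Otherwise $C$ yields two permutations: $\sigma_0=\mathrm{id}$ and $\sigma_1$, which equals $C$ on $V(C)$ and is the identity elsewhere. Their determinant terms have opposite signs. For $t\in\{0,1\}$ define the integer realization $B^{(t)}$ by giving entries on the support of $\sigma_t$ magnitude $K$ and all other nonzero entries magnitude $1$. The term for $\sigma_t$ has magnitude $K^n$. Every other permutation uses at most $n-1$ entries in that support, so the remaining terms contribute at most $(n!-1)K^{n-1}$ in total. Taking $K=n!+1$, which has $O(n\log n)$ bits, gives $\sgn\det B^{(0)}=-\sgn\det B^{(1)}$.

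Next, walk from $B^{(0)}$ to $B^{(1)}$ by changing one entry magnitude at a time. At most $2n$ entries differ, and for each we move its magnitude from the old value to the new one (between $1$ and $K$). Along each step the determinant is affine in the changing parameter $x$, because a single entry appears linearly. Evaluating $\det$ exactly at every intermediate integer matrix (Bareiss) locates a step where the sign changes weakly. If some intermediate matrix already has $\det=0$, take it as $B$. Otherwise the step has the form $\det=\alpha+\beta x$ with $x$ ranging over an interval strictly inside $(0,\infty)$, and the zero is $x^*=-\alpha/\beta>0$. Setting the entry to $x^*$ keeps the sign pattern, since $x^*$ lies between two positive magnitudes. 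Multiplying the whole matrix by $|\beta|$ gives an integer $B$ with $\det B=0$ and $\sgn(B)=A$, after undoing the row and column normalizations.

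The bit bounds are checked as follows. $\alpha$ and $\beta$ are determinants and cofactors of matrices with entries at most $K$, so Hadamard gives $|\beta|\le n^{n/2}K^{n}$. That is $O(n^2\log n)$ bits, and the entries of $B$ therefore have $O(n^2\log n)$ bits. Finally, $z$ comes from fraction-free elimination: choose a maximal nonsingular minor of rank $r<n$ and take $z$ as the vector of signed $r\times r$ minors (Cramer). Each minor has entries of $O(n^2\log n)$ bits, so it is bounded by $O(n\cdot n^2\log n + n\log n)=O(n^3\log n)$ bits, and it is nonzero with $Bz=0$.

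I expect the main obstacle to be the even-cycle step. The cited recognition algorithms must actually return a cycle, not merely a yes/no answer, and the arc-sign reduction must be set up correctly. The fallback is to use the decision oracle with arc deletion, which costs only polynomially many calls. The dominance and affine-step arguments are routine by comparison.
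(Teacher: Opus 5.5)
Your proposal follows the paper's proof essentially step by step. The shared steps are the perfect matching, the diagonal normalization, the even-cycle search by arc-deletion self-reduction, term domination with $K=n!+1$, a one-coordinate-at-a-time path on which the determinant is affine, clearing a single denominator, and a Cramer/Hadamard null vector. You scale by $|\beta|$ rather than by the reduced denominator $q$; that is equally valid and gives the same bounds.

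The one step that fails as literally written is the gadget. You normalize to diagonal $-1$. A directed $k$-cycle with $r$ negative arcs then gives a term whose sign relative to the identity is $(-1)^{r-1}$, so the obstruction is a cycle with $r$ even, as your Bassett--Maybee--Quirk criterion says. If, as your wording suggests, you subdivide the \emph{negative} arcs, a cycle acquires length $k+r$, and its parity does not track $r$. For example, take diagonal $-1$ and positive arcs $1\to2\to3\to1$. The realization with all magnitudes $1$ has determinant $0$, yet the subdivided graph is a directed triangle with no even cycle, so your algorithm would wrongly report SNS.

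There are two ways to fix this. With diagonal $-1$, subdivide the \emph{positive} arcs instead, which gives length $2(k-r)+r\equiv r \pmod 2$. Alternatively, normalize to diagonal $+1$ as the paper does. There the relative sign is $(-1)^{k-1+r}$, and subdividing negative arcs is correct. In either version, contract the subdivision vertices (each has in- and out-degree one) to recover a cycle of $D$ before you form $\sigma_1$.
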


The contribution is therefore certificate extraction rather than a new
structure theorem for even directed cycles: recognition is converted into a
polynomial-bit algebraic witness for the negative SNS case.

\section{Relation to previous work}
\label{sec:relation}

The determinant-term criterion underlying SNS is classical
\cite{KleeLadnerManber1984,BrualdiShader1995}. Its direct use is not an
efficient algorithm, since a matrix may have exponentially many nonzero
terms. The graph-theoretic line from convertible matrices and Pfaffian
orientations to even directed cycles replaces this enumeration by structural
recognition
\cite{Little1975,VaziraniYannakakis1989,SeymourThomassen1987,
Thomassen1992}. Robertson, Seymour, and Thomas, and independently the later
development by McCuaig, supply the structural machinery from which
polynomial-time recognition follows
\cite{RobertsonSeymourThomas1999,McCuaig2004}. We use that machinery only as
the black box in Theorem~\ref{thm:even-cycle-black-box}; no new even-cycle
structure theorem is claimed.

McCuaig's Theorem 4 is especially close to the present result. Among other
statements, it equates non-SNS with the existence of opposite-sign determinant
terms, a real singular realization, and a row-balanced signed column scaling
\cite{McCuaig2004}. Its large-term argument gives realizations with opposite
determinant signs, after which continuity gives a real zero on the segment
joining them. A determinant on that segment can have degree $n$, however, so
continuity alone neither makes the zero rational nor bounds an exact encoding.
Conversely, the passage from a singular realization to a row-balanced scaling
starts with that realization and a kernel vector already in hand. These
equivalences therefore do not provide the functional map required by the
Handbook conjecture: given $A$, output an integer singular member of $\Q(A)$
with polynomially many bits.

Our new ingredient begins after an opposite term has been extracted. Replacing
the single high-degree segment by a coordinate path makes every edge
multiaffine in only one active variable, hence affine. The sign-changing edge
has a rational root obtained from two integer determinant values. The explicit
size analysis then shows that clearing its denominator is polynomially
bounded, and exact elimination adds the independently verifiable null vector.
Thus any polynomial-time procedure that outputs an opposite term can be used
as the combinatorial front end of our algebraic certificate construction.

\section{Determinant terms and cycle extraction}
\label{sec:cycles}

\subsection{Terms and normalization}

A nonzero determinant term of $A$ is indexed by a permutation $\pi\in\Sn$
with $A_{i,\pi(i)}\ne0$ for every $i$. Its sign is
\[
  c_\pi=\sgn(\pi)\prod_{i=1}^n A_{i,\pi(i)}\in\{-1,+1\}.
\]

\begin{proposition}[Term criterion]
\label{prop:term-criterion}
A square sign pattern is SNS if and only if it has at least one nonzero
determinant term and all its nonzero terms have the same sign.
\end{proposition}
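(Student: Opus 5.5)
The plan is to view $\det B$, for $B\in\Q(A)$, as a polynomial in the magnitudes $x_{ij}=|B_{ij}|>0$ over the support of $A$. Expanding,
\[
  \det B=\sum_{\pi}c_\pi\prod_{i=1}^n x_{i,\pi(i)},
\]
where the sum runs over permutations indexing nonzero terms. Distinct $\pi$ give distinct monomials, so no cancellation between terms occurs. The two directions of the criterion are handled separately.

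For the "if" direction, assume at least one nonzero term exists and all have the same sign $c$. Every monomial is strictly positive on the open orthant $x>0$. Hence $\det B$ is a nonempty sum of terms of sign $c$, so $\det B\neq 0$, and every $B\in\Q(A)$ is nonsingular.

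For the "only if" direction, we argue by contrapositive, in two cases.

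First, suppose $A$ has no nonzero term. Then $\det B$ is the zero polynomial, so every $B\in\Q(A)$ is singular. Since $\Q(A)$ is nonempty (take $B=A$), $A$ is not SNS.

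Second, suppose there are terms $\pi,\sigma$ with $c_\pi=+1$ and $c_\sigma=-1$. We make each of them dominant in turn. Set $x_{i,\pi(i)}=t$ for all $i$ and every other magnitude equal to $1$. Then $\det B(t)$ is a polynomial in $t$ whose coefficient of $t^n$ comes only from $\pi$, because any other permutation agrees with $\pi$ in at most $n-2$ positions. That coefficient is $c_\pi=+1$, so $\det B(t)>0$ for large $t$. The symmetric construction with $\sigma$ yields a realization with negative determinant.

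Since $\Q(A)$ is convex (it is a product of open intervals or $\{0\}$), the segment between the two realizations stays in $\Q(A)$. The determinant is continuous along that segment, so by the intermediate value theorem it vanishes somewhere on it. This gives a singular member of $\Q(A)$.

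The only step needing care is the leading-coefficient claim, namely that no other permutation reaches degree $n$ in $t$. This holds because any permutation using all $n$ entries $(i,\pi(i))$ equals $\pi$. Everything else is routine.
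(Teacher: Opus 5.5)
Your proof is correct. The ``if'' direction and the no-term case agree with the paper's, and you also note that $A\in\Q(A)$, which the paper leaves implicit.

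The opposite-sign case takes a different route. The paper does not argue it inline. It appeals to Lemma~\ref{lem:domination}, which makes domination quantitative with $M=n!+1$ by bounding the at most $n!-1$ competing terms by $M^{n-1}$. It then appeals to Lemma~\ref{lem:interpolation}, which moves from $B^+$ to $B^-$ one magnitude at a time, so that on the sign-changing edge the determinant is affine in a single variable. You instead use a qualitative domination: the coefficient of $t^n$ is $c_\pi$, so large $t$ suffices. You follow it with the straight segment in the convex set $\Q(A)$ and the intermediate value theorem. This is essentially the large-term-plus-continuity argument the paper attributes to McCuaig in Section~\ref{sec:relation}.

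The proposition only asserts that some real singular member of $\Q(A)$ exists. Your argument is therefore fully adequate, shorter, and self-contained, with no forward reference to later lemmas.

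What the paper's route buys is exactness. On your segment the determinant is a polynomial of degree up to $n$ in the parameter, so its zero may be irrational and comes with no size bound. The coordinatewise edge instead gives the rational root $t_0=(av-bu)/(v-u)$ from two integer determinant values. That yields an integer singular realization with the bit bounds of Lemma~\ref{lem:bit-bound}, which Theorem~\ref{thm:main} needs and the proposition does not.
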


\begin{proof}
With no nonzero term, the determinant polynomial is identically zero. If all
nonzero terms have the same sign, every realization has a nonempty determinant
sum with that sign. Conversely, two opposite-sign terms yield a singular
realization by Lemmas~\ref{lem:domination} and \ref{lem:interpolation}.
\end{proof}

Let $G_A$ be the bipartite support graph of $A$. A nonzero determinant term is
exactly a perfect matching of $G_A$. Thus, if $G_A$ has no perfect matching,
$A$ itself is an integer singular witness. Otherwise choose a perfect matching
$\mu$, permute columns so that $\mu$ is the diagonal, and multiply each row by
the sign of its diagonal entry. The resulting pattern
\[
  C=DAP
\]
has $C_{ii}=+1$ for all $i$, where $D$ is a diagonal sign matrix and $P$ is a
permutation matrix. A singular $X\in\Q(C)$ transforms back to
$DXP^{\mathsf T}\in\Q(A)$, so it suffices to treat the positive-diagonal case.

\subsection{The even-cycle black box}

We use the following known algorithmic result.

\begin{theorem}[Even-directed-cycle black box]
\label{thm:even-cycle-black-box}
Given a finite digraph $H$, one can decide in deterministic polynomial time
whether $H$ contains an even directed cycle. If one exists, such a cycle can
be output in deterministic polynomial time.
\end{theorem}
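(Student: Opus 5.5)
The statement is a known black-box result, so my plan is to assemble it from the literature rather than prove it from scratch. The deciding half is the deepest. Robertson, Seymour and Thomas \cite{RobertsonSeymourThomas1999}, and independently McCuaig \cite{McCuaig2004}, give a structural characterization of \emph{even-cycle-free} digraphs (equivalently, of Pfaffian bipartite graphs and SNS patterns, via the Little/Vazirani--Yannakakis correspondence \cite{Little1975,VaziraniYannakakis1989}). That characterization yields a deterministic polynomial-time recognition algorithm. I would first reduce to strongly connected digraphs, since every cycle lies inside a strong component. I would then apply the standard reductions to the bipartite-matching setting: the digraph $H$ on $[n]$ corresponds to the pattern $C$ with positive diagonal and off-diagonal $C_{ij}=-1$ for each arc $(i,j)$. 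Finally I would invoke their recognition routine, which decomposes the brace into planar pieces and the Heawood graph via $4$-cycle sums and tests each piece.

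The search half follows from the decision half by self-reducibility. Suppose $H$ is reported to contain an even directed cycle. Iterate over the arcs $e$ in a fixed order, and delete $e$ permanently whenever $H-e$ still contains an even directed cycle, as reported by the decision oracle. Each deletion keeps the invariant that the current digraph contains an even cycle. At the end every remaining arc is essential, so the final digraph $H^\ast$ contains an even cycle but no proper spanning subdigraph of it does.

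I claim $H^\ast$ is then exactly one even directed cycle plus isolated vertices. Take any even cycle $Z$ in $H^\ast$. Any arc outside $Z$ could have been deleted while preserving $Z$, which contradicts minimality. So the arc set of $H^\ast$ is $E(Z)$, and $Z$ is read off by a walk along the remaining arcs. This uses at most $|E(H)|$ oracle calls, each on a digraph no larger than $H$, so the running time stays polynomial and deterministic.

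The main obstacle is the decision step itself, which I would not reprove: I would cite \cite{RobertsonSeymourThomas1999,McCuaig2004} for the deterministic polynomial bound. The one point to verify is that their algorithm is stated for digraphs, or translates faithfully through the pattern correspondence above. That translation relies on the fact that, for $C$ normalized with positive diagonal and nonpositive off-diagonal, $C$ is SNS exactly when $H$ has no even cycle. The self-reduction step is routine once the decision procedure is available.
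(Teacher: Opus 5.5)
Your search half is the paper's argument: greedy arc deletion, after which every arc of the edge-minimal $H^\ast$ must lie on any even cycle $Z$ of $H^\ast$. Like the paper, you cite \cite{RobertsonSeymourThomas1999,McCuaig2004} for the decision half. The problem is the translation you single out as the one point to verify, which is false as you state it. Take $C$ with positive diagonal and $C_{ij}=-1$ for every arc. Then a directed $k$-cycle of $H$ gives a determinant term of relative sign $(-1)^{k-1}(-1)^{k}=-1$ for every $k$; this is the case $r=k$ of the computation in Lemma~\ref{lem:even-cycle}. So such a $C$ is SNS if and only if $H$ is \emph{acyclic}. For example, a directed $3$-cycle has no even cycle, yet $C=I-P$, with $P$ the cyclic permutation matrix, has zero row sums. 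Hence $C(1,1,1)^{\mathsf T}=0$, and the realization with all magnitudes equal to $1$ is singular. Feeding an even-cycle query through your encoding would therefore decide acyclicity instead.

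The correct encoding is $C=I+\mathrm{Adj}(H)$, with positive arc entries. Then $r=0$, the relative sign of a $k$-cycle is $(-1)^{k-1}$, which is $-1$ exactly for even $k$, and in the paper's notation $\widetilde D(C)=H$. Equivalently, any pattern works in which the diagonal and arc entries have the same sign. In any case, the detour through sign patterns and brace decompositions is unnecessary. Robertson, Seymour and Thomas state the polynomial-time test for even directed circuits in digraphs directly, and that is all the paper invokes. With the sign corrected, or the detour dropped, your proof coincides with the paper's.
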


\begin{proof}
The decision result follows from the structural and algorithmic theory in
\cite{VaziraniYannakakis1989,RobertsonSeymourThomas1999,McCuaig2004}; see
also the later shortest-even-cycle algorithm in
\cite{BjorklundHusfeldtKaski2022}. Decision gives search by edge deletion:
delete an arc whenever the remaining graph still contains an even directed
cycle. At termination the remaining graph $H^*$ is edge-minimal with this
property. If $C$ is any even directed cycle in $H^*$, an arc outside $C$ could
still be deleted. Hence every nonisolated arc of $H^*$ lies on $C$, and $C$
can be read off by traversal. At most $|E(H)|+1$ decision calls are used.
\end{proof}

Assume now that $C_{ii}=+1$ for all $i$. Form a signed digraph $D(C)$ on
$\{1,\ldots,n\}$ with an arc $i\to j$ for every off-diagonal nonzero entry
$C_{ij}$. Replace each positive arc by a path of length one and each negative
arc by a path of length two, using a distinct subdivision vertex for every
negative arc. Write $\widetilde D(C)$ for the resulting unsigned digraph.

\begin{lemma}
\label{lem:even-cycle}
The pattern $C$ has a determinant term opposite in sign to the identity term
if and only if $\widetilde D(C)$ contains an even directed cycle. Such a cycle
yields an opposite-sign term in linear time.
\end{lemma}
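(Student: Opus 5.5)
The plan is to compare any term with the identity term through the permutation $\pi\pi_0^{-1}$, that is, through its cycle decomposition. Since $C_{ii}=+1$, the identity term has sign $+1$. Take any nonzero term $\pi$ and decompose $\pi$ into disjoint cycles. Fixed points contribute the factor $C_{ii}=+1$. A nontrivial cycle $(i_1\,i_2\,\cdots\,i_k)$ with $k\ge2$ uses the arcs $i_1\to i_2\to\cdots\to i_k\to i_1$ of $D(C)$. Its contribution to $c_\pi$ is its permutation sign $(-1)^{k-1}$ times the product of the arc signs. That product is $(-1)^{m}$, where $m$ is the number of negative arcs on the cycle. So the cycle contributes $(-1)^{k-1+m}=-(-1)^{k+m}$. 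In $\widetilde D(C)$ the same cycle becomes a directed cycle of length $k+m$, because each negative arc is subdivided once. Hence a cycle of $\pi$ contributes $-1$ exactly when its image in $\widetilde D(C)$ is even, and $c_\pi=\prod_{\text{cycles}}(-(-1)^{\mathrm{len}})$.

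For the forward direction, suppose $c_\pi=-1$. Then some nontrivial cycle of $\pi$ contributes $-1$. Its image in $\widetilde D(C)$ is a directed cycle of even length, and it is simple because the subdivision vertices are distinct per arc.

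For the converse, take an even directed cycle $\widetilde Z$ in $\widetilde D(C)$. Each subdivision vertex has in- and out-degree one, so $\widetilde Z$ enters and leaves each one along the unique path of its negative arc. Contracting these paths therefore gives a directed cycle $Z$ in $D(C)$ on distinct original vertices. $Z$ has at least two vertices because $D(C)$ has no loops. Define $\pi$ to follow $Z$ on its vertices and to fix every other vertex. All entries used are nonzero, since the off-diagonal ones are arcs and the diagonal ones are $+1$. By the computation above, $c_\pi=-1\cdot(+1)^{\text{rest}}=-1$, which is opposite to the identity term.

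For the algorithmic part, read off the original vertices along $\widetilde Z$, skipping subdivision vertices, and set $\pi(i_t)=i_{t+1}$ with $\pi(j)=j$ otherwise. This takes $O(n+|\widetilde Z|)$ time, which is linear. The output term is with respect to $C$, and applying $P$ translates it back to a pair of opposite terms of $A$ if needed.

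The main obstacle is getting the sign bookkeeping right. Checking that $(-1)^{k-1}$ combined with the arc signs gives exactly the parity of the subdivided length is the key identity. I would verify it on a 2-cycle: with two positive arcs the term is $-1$ and the subdivided length is $2$, which is even, as required.
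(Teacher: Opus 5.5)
Your proof is correct and follows the paper's argument. It uses the same per-cycle identity relating the relative sign $(-1)^{k-1+m}$ to the subdivided length $k+m$, the same contraction of an even cycle of $\widetilde D(C)$ through the in/out-degree-one subdivision vertices, and the same cycle decomposition of a negative term to isolate a constituent cycle with relative sign $-1$. Your explicit checks that the contracted cycle has at least two vertices and that extraction takes $O(n+|\widetilde Z|)$ time are details the paper leaves implicit.
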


\begin{proof}
Consider a directed cycle $\gamma$ of length $k$ in $D(C)$, containing $r$
negative arcs. The associated permutation cycle, with all other vertices
fixed, has sign relative to the identity term
\[
  (-1)^{k-1}\prod_{e\in\gamma}C_e=(-1)^{k-1+r}.
\]
After subdivision, $\gamma$ has length $k+r$. This length is even exactly
when the displayed relative sign is $-1$.

Each subdivision vertex has one incoming and one outgoing arc, namely the two
arcs replacing one negative arc. A directed cycle visiting that vertex must
therefore use the entire length-two path. Contracting subdivision vertices in
an even cycle of $\widetilde D(C)$ gives a cycle of $D(C)$, to which the parity
calculation applies.

Conversely, decompose the permutation of any nonidentity determinant term
into disjoint directed cycles in $D(C)$. The term's sign relative to the
identity is the product of the cycle-relative signs. If it is negative, at
least one constituent cycle has relative sign $-1$, and its subdivision is an
even directed cycle.
\end{proof}

Theorem~\ref{thm:even-cycle-black-box} and Lemma~\ref{lem:even-cycle} thus
either prove that every nonzero term has the identity sign or produce an
opposite-sign permutation $\pi_-$. In the latter case set
$\pi_+=\mathrm{id}$.

\section{Constructing an integer singular realization}
\label{sec:construction}

\begin{lemma}[Term domination]
\label{lem:domination}
Given nonzero determinant terms $\pi_+$ and $\pi_-$ of opposite signs, one can
construct integer matrices $B^+,B^-\in\Q(A)$ such that
\[
  \sgn\det(B^+)=c_{\pi_+},\qquad
  \sgn\det(B^-)=c_{\pi_-}.
\]
\end{lemma}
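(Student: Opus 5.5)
For each sign $\sigma\in\{+,-\}$ I will define $B^\sigma$ entrywise by
\[
  B^\sigma_{ij}=
  \begin{cases}
    M\,A_{ij}, & j=\pi_\sigma(i),\\
    A_{ij}, & \text{otherwise},
  \end{cases}
\]
with one integer $M\ge 2$ shared by both matrices. Every nonzero entry keeps its sign and zeros stay zero, so $B^\sigma\in\Q(A)$ and $B^\sigma$ is an integer matrix. The goal is then to show that the term indexed by $\pi_\sigma$ outweighs all the other terms of $\det B^\sigma$ combined.

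Expand $\det B^\sigma=\sum_{\tau}\sgn(\tau)\prod_i B^\sigma_{i,\tau(i)}$ over the nonzero terms $\tau$ of $A$. The term $\tau=\pi_\sigma$ equals $c_{\pi_\sigma}M^n$. Any other nonzero term $\tau\ne\pi_\sigma$ has at least one row with $\tau(i)\ne\pi_\sigma(i)$, and since both are permutations it has at least two such rows. Each such row contributes a factor of absolute value $1$, so $|\prod_i B^\sigma_{i,\tau(i)}|\le M^{n-2}$. There are at most $n!-1$ such terms, and together they contribute at most $(n!-1)M^{n-2}<n!\,M^{n-2}$ in absolute value. It therefore suffices to choose $M$ with $M^2\ge n!$. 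For example, $M=n^{\lceil n/2\rceil}$ works, as does the simpler choice $M=n!$. This gives $\sgn\det B^\sigma=c_{\pi_\sigma}$.

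For the bit bounds needed later, I will take $M=n^{\lceil n/2\rceil}$ (or $\lceil\sqrt{n!}\,\rceil$). Then the entries of $B^\pm$ have $O(n\log n)$ bits, and both matrices are computable in polynomial time from $\pi_\pm$. The construction is built on $A$ itself, not on the normalized $C$, so it does not depend on the normalization of Section~\ref{sec:cycles}. If it is applied to $C$ instead, the result transfers back through $X\mapsto DXP^{\mathsf T}$, since the row signs in $D$ and the column permutation $P$ multiply all term signs uniformly.

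The only delicate step is the counting in the second paragraph. One must use that two distinct permutations differ in at least two positions, which gives the $M^{n-2}$ bound rather than $M^{n-1}$. One must also check that the bound on the number of competing terms is uniform in $A$. Both points are elementary, and I do not expect a real obstacle.
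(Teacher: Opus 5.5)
Your proof is correct and uses the same large-term domination construction as the paper: magnitude $M$ on the entries of $\pi_\sigma$, magnitude $1$ on every other nonzero entry, and at most $n!-1$ competing terms. Your observation that distinct permutations differ in at least two rows sharpens the paper's $M^{n-1}$ bound to $M^{n-2}$, so any $M$ with $M^2\ge n!$ works instead of the paper's $M=n!+1$; this changes only constants in the $O(n\log n)$ entry bit length.
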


\begin{proof}
Set $M=n!+1$. For $B^+$, give magnitude $M$ to every entry
$(i,\pi_+(i))$ and magnitude $1$ to every other nonzero entry, preserving the
signs of $A$. The selected term has magnitude $M^n$. Every other term misses
at least one selected entry and has magnitude at most $M^{n-1}$. Since there
are at most $n!-1$ such terms,
\[
  (n!-1)M^{n-1}<M^n.
\]
The selected term therefore determines the determinant sign. Construct
$B^-$ in the same way. This is the standard large-term domination argument
used in \cite{McCuaig2004}, made quantitative here.
\end{proof}

\begin{lemma}[Coordinatewise interpolation]
\label{lem:interpolation}
Given integer matrices $B^+,B^-\in\Q(A)$ with opposite nonzero determinant
signs, one can compute an integer singular matrix $B\in\Q(A)$ using at most
$n^2+1$ exact determinant evaluations.
\end{lemma}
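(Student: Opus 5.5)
The plan is to walk from $B^+$ to $B^-$ through a sequence of integer matrices in $\Q(A)$ that changes one entry at a time, and to locate a step on which the determinant changes sign. Enumerate the positions $(i,j)$ with $A_{ij}\neq 0$ as $p_1,\dots,p_m$, where $m\le n^2$. Define $B^{(0)}=B^+$, and let $B^{(k)}$ be obtained from $B^{(k-1)}$ by replacing the entry at $p_k$ with the corresponding entry of $B^-$. Then $B^{(m)}=B^-$. Every $B^{(k)}$ is an integer matrix whose entries each agree with $B^+$ or $B^-$, so every $B^{(k)}$ lies in $\Q(A)$. Compute $d_k=\det B^{(k)}$ exactly for $k=0,\dots,m$, which takes at most $n^2+1$ determinant evaluations (e.g.\ by Bareiss elimination). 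If some $d_k=0$, output $B=B^{(k)}$. Otherwise, since $d_0$ and $d_m$ have opposite signs, there is a first index $k$ with $d_{k-1}d_k<0$.

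On that step, parametrize the entry at $p_k=(i,j)$ by $t$: let $X(t)$ equal $B^{(k-1)}$ except that $X(t)_{ij}=t$. Expanding along row $i$ shows that $\det X(t)=\alpha+\beta t$ is affine in $t$. Let $a=B^+_{ij}$ and $b=B^-_{ij}$. These share the sign $A_{ij}$, and the two evaluations $\alpha+\beta a=d_{k-1}$ and $\alpha+\beta b=d_k$ have opposite signs. Hence $a\neq b$ and
\[
\beta=\frac{d_k-d_{k-1}}{b-a},\qquad t^*=\frac{a\,d_k-b\,d_{k-1}}{d_k-d_{k-1}}.
\]
Because an affine function that changes sign between $a$ and $b$ vanishes strictly between them, $t^*$ lies in the open interval between $a$ and $b$. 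It therefore has the same nonzero sign as $A_{ij}$, so $X(t^*)\in\Q(A)$ and $\det X(t^*)=0$.

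The one remaining step is to make the witness integral while keeping the same sign pattern. Write $t^*=u/v$ with $v=d_k-d_{k-1}$, which is nonzero because $d_{k-1}d_k<0$. Replace $v$ by $|v|$, and flip the sign of $u$ accordingly, so that the denominator is positive. Then set $B=|v|\cdot X(t^*)$. Every entry of $B$ is an integer: off-diagonal-to-$p_k$ entries are integer entries of $B^{(k-1)}$ times $|v|$, and the entry at $p_k$ equals $\pm u$. Scaling by a positive integer preserves all signs, so $B\in\Q(A)$, and $\det B=|v|^n\det X(t^*)=0$.

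Nothing is genuinely difficult here. The only point needing care is confirming that $t^*$ keeps the prescribed sign, and this is exactly the strict-interval argument above, which uses that $a$ and $b$ have the same sign. Bit-size bookkeeping (via Hadamard's bound on $d_k$ with entries $\le n!+1$) belongs to the main theorem rather than to this lemma.
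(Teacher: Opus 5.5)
Your proposal is correct and follows the paper's proof essentially step for step. Both arguments use the same one-coordinate-at-a-time path from $B^+$ to $B^-$ with at most $m+1\le n^2+1$ determinant evaluations, the same affine crossing on a sign-changing edge with zero $(a d_k-b d_{k-1})/(d_k-d_{k-1})$ strictly between two same-signed values, and a positive rescaling to clear the denominator. The only differences are cosmetic: you parametrize signed entries rather than magnitudes, and you scale by the unreduced denominator $|d_k-d_{k-1}|$ instead of the reduced $q$. Neither affects validity, nor the later bit bound, since the paper already bounds the unreduced denominator.
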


\begin{proof}
Enumerate the nonzero positions as $e_1,\ldots,e_m$, where $m\le n^2$, and
write $x^+$ and $x^-$ for the positive magnitude vectors of $B^+$ and $B^-$.
Define $x^0=x^+$ and obtain $x^k$ by replacing the first $k$ coordinates with
their values in $x^-$. Coordinates whose endpoint values agree may be
skipped. Let
\[
  D_k=\det(A\mathbin{\odot}x^k).
\]
The endpoint values have opposite signs. Hence either some $D_k=0$, already
giving an integer witness, or $D_{k-1}D_k<0$ for some consecutive pair.

On that edge only one positive magnitude $t$ changes, from $a$ to $b$. With
all other entries fixed the determinant is affine in $t$. Write its endpoint
values as $g(a)=u$ and $g(b)=v$, where $uv<0$. Its unique zero is
\[
  t_0=\frac{av-bu}{v-u}.
\]
It lies strictly between $a$ and $b$, so no nonzero entry changes sign. Write
$t_0=p/q$ in lowest terms with $q>0$. Replacing that magnitude by $t_0$ and
multiplying the whole matrix by $q$ gives an integer singular matrix in
$\Q(A)$.
\end{proof}

The coordinatewise path is the arithmetic step missing from a general
continuous interpolation: its sign-changing equation is linear, so the zero
is rational and exactly recoverable from two integer determinants.

\subsection{A worked example}
\label{sec:example}

Consider the positive-diagonal pattern
\[
 A=\begin{pmatrix}
 +&-&-\\
 -&+&-\\
 -&+&+
 \end{pmatrix}.
\]
The identity term is positive, whereas the transposition $(1\ 2)$ gives a
negative determinant term. For $n=3$, Lemma~\ref{lem:domination} uses
$M=3!+1=7$ and produces
\[
 B^+=\begin{pmatrix}
 7&-1&-1\\
 -1&7&-1\\
 -1&1&7
 \end{pmatrix},\qquad \det(B^+)=336,
\]
and
\[
 B^-=\begin{pmatrix}
 1&-7&-1\\
 -7&1&-1\\
 -1&1&7
 \end{pmatrix},\qquad \det(B^-)=-336.
\]
Use row-major order for the coordinate path. After changing the $(1,1)$
magnitude from $7$ to $1$, the determinant is $36$. On the next edge, only
the magnitude of the negative entry $(1,2)$ changes. The matrices on this edge
are
\[
 X(t)=\begin{pmatrix}
 1&-t&-1\\
 -1&7&-1\\
 -1&1&7
 \end{pmatrix},\qquad 1\le t\le7,
\]
and direct expansion gives $\det X(t)=44-8t$. Thus the determinant changes
from $36$ to $-12$ and vanishes at the nonintegral rational value
$t_0=11/2$. Clearing the denominator yields
\[
 B=2X(11/2)=\begin{pmatrix}
 2&-11&-2\\
 -2&14&-2\\
 -2&2&14
 \end{pmatrix}\in\Q(A).
\]
Indeed, $\det(B)=0$ and
\[
 B\begin{pmatrix}25\\4\\3\end{pmatrix}=0.
\]
The example exhibits all arithmetic features of the general construction:
opposite endpoint signs, a one-coordinate affine crossing, a rational root,
one global denominator clearing, and a directly checkable integer null
certificate.

\section{Bit complexity and proof of the main theorem}
\label{sec:complexity}

\begin{lemma}[Witness size]
\label{lem:bit-bound}
The integer matrix produced by Lemmas~\ref{lem:domination} and
\ref{lem:interpolation} has entries of bit-length $O(n^2\log n)$ and is
computable in polynomial bit complexity.
\end{lemma}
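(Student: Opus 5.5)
The plan is to bound, in turn, the sizes of $M$, of every intermediate matrix and determinant value, and of $t_0$; this yields both the final entry size and the cost of each step. First, $M=n!+1\le n^n+1$, so $\log M=O(n\log n)$. Every entry of $B^\pm$, and of each $A\odot x^k$ on the coordinate path, has absolute value at most $M$. Each such entry therefore has $O(n\log n)$ bits.

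Next we bound the determinants. Hadamard's inequality, or simply the count of at most $n!$ terms each of magnitude at most $M^n$, gives $|D_k|\le n!\,M^n$. Hence $\log|D_k|=O(n\log n)+n\cdot O(n\log n)=O(n^2\log n)$. Each $D_k$ is the determinant of an integer matrix with $O(n\log n)$-bit entries, so it can be computed exactly in polynomial bit complexity. Bareiss fraction-free elimination works, since all of its intermediate quantities are minors and obey the same Hadamard bound. There are at most $n^2+1$ such evaluations, and $m\le n^2$ coordinates need to be scanned.

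On the sign-changing edge we have $t_0=(av-bu)/(v-u)$ with $a,b\le M$ and $|u|,|v|\le n!\,M^n$. The numerator and denominator of $t_0$ are therefore integers of absolute value at most $2Mn!M^n$, which is $O(n^2\log n)$ bits. Reducing to lowest terms can only shrink $p$ and $q$, and the gcd computation is polynomial. The output $B$ has one entry equal to $\pm p$ and all other entries equal to $q$ times an entry of magnitude at most $M$. Every entry thus has bit-length at most $\log q+\log M+O(1)=O(n^2\log n)$. If instead some $D_k=0$, the witness $A\odot x^k$ already has $O(n\log n)$-bit entries.

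Finally we add the cost of the remaining steps: the normalization $C=DAP$, computing a perfect matching, the even-cycle black box of Theorem~\ref{thm:even-cycle-black-box}, and the conversion back via $DXP^{\mathsf T}$. These are polynomial and do not change entry magnitudes. The main obstacle is keeping the determinant evaluations polynomial in bit complexity, rather than merely polynomial in arithmetic operations, because naive Gaussian elimination over $\mathbb{Q}$ can blow up. I would cite Bareiss's algorithm, or alternatively computation modulo sufficiently many primes with CRT reconstruction; the $O(n^2\log n)$ Hadamard bound fixes how many primes are needed.
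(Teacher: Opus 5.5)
Your proposal is correct and follows essentially the same route as the paper. Both arguments use $\log M=O(n\log n)$, the Leibniz bound $|u|,|v|\le n!\,M^n$, the bound $2M\,n!\,M^n$ on the numerator and denominator of $t_0$, the entry bound via $q$ times a magnitude at most $M$, and fraction-free elimination for the exact determinants. Your extra remarks (the $D_k=0$ branch, and that Bareiss's intermediate quantities are minors obeying the Hadamard bound) only make explicit details the paper leaves implicit.
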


\begin{proof}
Since $M=n!+1$, we have $\log M=O(n\log n)$. Every matrix at a vertex of the
coordinate path has entries bounded by $M$. Thus each endpoint determinant on
an edge satisfies the Leibniz bound
\[
  |u|,|v|\le n!M^n
\]
and has $O(n^2\log n)$ bits. On a nonconstant edge,
$\{a,b\}=\{1,M\}$. Before reduction, the numerator and denominator of $t_0$
satisfy
\[
  |av-bu|\le 2M(n!M^n),\qquad
  |v-u|\le 2n!M^n.
\]
The reduced integers $p,q$ therefore have $O(n^2\log n)$ bits. After scaling
by $q$, every nonzero entry is bounded by $\max\{qM,|p|\}$ and has the same
asymptotic bit-length. Exact determinants can be evaluated in polynomial bit
complexity by fraction-free elimination.
\end{proof}

\begin{lemma}[Integer null certificate]
\label{lem:null-vector}
Let $B\in\Z^{n\times n}$ be singular and
$H=\max\{1,\max_{i,j}|B_{ij}|\}$. In polynomial time one can compute
$z\in\Z^n\setminus\{0\}$ with $Bz=0$ and entry bit-length
$O(n\log n+n\log H)$.
\end{lemma}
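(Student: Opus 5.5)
The plan is to obtain $z$ from a maximal nonsingular minor via Cramer's rule, so that every entry of $z$ is, up to sign, a minor of $B$, and then bound those minors with Hadamard's inequality.

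\textbf{Step 1: rank and a pivot minor.} Compute the rank $r<n$ of $B$ by exact fraction-free (Bareiss) elimination. This also identifies row indices $I$ and column indices $J$ with $|I|=|J|=r$ and $\det B_{I,J}\ne0$. Bareiss elimination runs in polynomial bit complexity because every intermediate quantity it stores is a minor of $B$. If $r=0$, then $B=0$ and $z=e_1$ works. Otherwise, since $r<n$, we can choose a column $k\notin J$.

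\textbf{Step 2: construct $z$.} Define $z_k=\det B_{I,J}$. For $j\in J$, set $z_j=-\det$ of the matrix obtained from $B_{I,J}$ by replacing column $j$ with $B_{I,k}$. Set all other coordinates to $0$. By Cramer's rule, $z_J$ solves $B_{I,J}z_J=-B_{I,k}z_k$, so the rows in $I$ satisfy $B_Iz=0$. Every other row of $B$ is a linear combination of the rows in $I$, because $r$ is the rank and the rows in $I$ are independent. Hence $Bz=0$, and $z\ne0$ because $z_k\ne0$. An equivalent description is that $z$ is the vector of signed $r\times r$ minors of the $r\times(r+1)$ submatrix $B_{I,J\cup\{k\}}$. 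Each such minor is an integer computed exactly by fraction-free elimination, so $z$ is integral without any denominator clearing.

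\textbf{Step 3: size bound.} Every entry of $z$ is, up to sign, an $r\times r$ minor of $B$. Hadamard's inequality bounds any such minor by $(\sqrt r\,H)^r\le n^{n/2}H^n$. Taking logarithms gives bit-length $O(n\log n+n\log H)$, as claimed.

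The main obstacle is Step 1: making sure the rank and the pivot sets $(I,J)$ come out of an exact elimination whose intermediate numbers stay polynomially bounded. The plan is to rely on the standard Bareiss property that intermediates are minors of the input, and therefore also obey the Hadamard bound. The remaining steps are direct consequences of Cramer's rule and Hadamard's inequality.
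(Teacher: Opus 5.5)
Your proposal is correct and follows essentially the same route as the paper. Both use a maximal nonsingular minor $B_{I,J}$, an extra column $k\notin J$, Cramer's rule, and Hadamard's bound $r^{r/2}H^r$ on the resulting order-$r$ minors; your choice $z_k=\det B_{I,J}$ is exactly the paper's denominator clearing. The only cosmetic difference is that you handle rows outside $I$ via the row space spanned by the rows in $I$, whereas the paper uses the column space spanned by the columns in $J$ and injectivity of restriction to $I$; both arguments are valid.
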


\begin{proof}
Let $r=\operatorname{rank}(B)<n$. For $r=0$, take a unit vector. Otherwise
choose a nonsingular submatrix $R=B[I,J]$ of order $r$ and an index
$k\notin J$. The columns indexed by $J$ form a basis of the column space.
Solve
\[
  Rx=-B[I,k]
\]
and set $z_J=x$, $z_k=1$, with all other free coordinates zero. The relation
holds on every row: its left-hand side is in the span of the basis columns and
vanishes on $I$, where restriction of that span is injective.

Cramer's rule and clearing denominators give an integer vector whose
coordinates are determinants of order $r$ with entries bounded by $H$.
Hadamard's inequality bounds them by $r^{r/2}H^r$, yielding the stated bit
bound. A rank profile and the vector are computable by exact elimination.
\end{proof}

\begin{proof}[Proof of Theorem~\ref{thm:main}]
Find a perfect matching in the support graph. If none exists, take $B=A$;
otherwise normalize the matching to a positive diagonal and construct
$\widetilde D(C)$. If the black box finds no even directed cycle,
Lemma~\ref{lem:even-cycle} and Proposition~\ref{prop:term-criterion} imply
that $A$ is SNS. If it finds one, Lemma~\ref{lem:even-cycle} converts it into
an opposite-sign term. Lemmas~\ref{lem:domination} and
\ref{lem:interpolation} then produce $B$, and Lemma~\ref{lem:bit-bound} gives
its $O(n^2\log n)$ entry bound. Undoing normalization preserves singularity,
the qualitative class, and bit-length.

Finally apply Lemma~\ref{lem:null-vector}. Here
$\log H=O(n^2\log n)$, so entries of $z$ have $O(n^3\log n)$ bits. Perfect
matching, graph construction, the polynomially many black-box calls, and all
exact arithmetic steps have polynomial bit complexity.
\end{proof}

The pair $(B,z)$ is independently checkable in polynomial time by testing the
entry signs, $z\ne0$, and the exact identity $Bz=0$; verification does not use
the even-cycle machinery.

\section{Consequences}
\label{sec:consequences}

The non-SNS branch of Theorem~\ref{thm:main} immediately resolves the matrix
witness problem that motivated the construction.

\begin{corollary}[Handbook Conjecture 14.12.4]
\label{cor:handbook}
Given a square non-SNS sign pattern
$A\in\{-1,0,+1\}^{n\times n}$, one can compute in polynomial time an integer
matrix $B\in\Q(A)$ with $\det(B)=0$ and polynomial-bit entries.
\end{corollary}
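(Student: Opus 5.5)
The corollary is the non-SNS branch of Theorem~\ref{thm:main}, with the null vector $z$ dropped. I would run the algorithm of that theorem on $A$ and show that, because $A$ is non-SNS, it cannot stop in the ``SNS'' branch and must return $B$.

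First I rule out the SNS branch. By Proposition~\ref{prop:term-criterion}, a non-SNS pattern either has no nonzero determinant term or has two nonzero terms of opposite signs.

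If there is no nonzero term, the support graph $G_A$ has no perfect matching. The algorithm detects this with a polynomial-time matching computation and outputs $B=A$. This matrix is integer, lies in $\Q(A)$, and has $\det(B)=0$ because its determinant polynomial vanishes identically.

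Otherwise a perfect matching $\mu$ exists and the algorithm normalizes to $C=DAP$ with positive diagonal. Multiplying a row by $-1$ flips the sign of every term, and a column permutation multiplies every term's sign by the same factor $\sgn(P)$. So $C$ still has terms of both signs, and in particular a term opposite in sign to the identity term. Lemma~\ref{lem:even-cycle} then guarantees that $\widetilde D(C)$ contains an even directed cycle. Theorem~\ref{thm:even-cycle-black-box} finds one in polynomial time, and Lemma~\ref{lem:even-cycle} turns it into an opposite-sign permutation $\pi_-$.

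Next I build the singular matrix. With $\pi_+=\mathrm{id}$ and $\pi_-$ in hand, Lemma~\ref{lem:domination} gives integer $B^\pm\in\Q(C)$ with opposite determinant signs. Lemma~\ref{lem:interpolation} gives an integer singular $X\in\Q(C)$. Lemma~\ref{lem:bit-bound} bounds its entries by $O(n^2\log n)$ bits and shows the construction runs in polynomial bit complexity. Setting $B=DXP^{\mathsf T}$ returns to $\Q(A)$. This step only permutes columns and flips row signs, so it keeps entries integral, preserves $\det=0$ up to sign, and leaves bit-lengths unchanged.

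The only point needing care is the invariance of ``has an opposite-sign term'' under normalization; the rest is the cited lemmas. Polynomial time holds because each step is polynomial: matching, forming $\widetilde D(C)$, at most $|E|+1$ black-box calls, and at most $n^2+1$ exact determinant evaluations.
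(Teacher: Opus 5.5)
Your proposal is correct and follows the paper's route. The paper obtains the corollary as the non-SNS branch of Theorem~\ref{thm:main}, and you have unpacked that theorem's proof with $z$ discarded: the matching check, normalization to $C=DAP$, even-cycle extraction, Lemmas~\ref{lem:domination}, \ref{lem:interpolation}, and \ref{lem:bit-bound}, and undoing the normalization. Your explicit check that normalization multiplies every term sign by a common factor, so a non-SNS input can never end in the ``SNS'' branch, is left implicit in the paper but is exactly what its argument needs.
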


The Handbook also records consequences for complement-invariant clause-sets
and balanced autarkies. Let $d_r^*(F)$ denote maximal reduced deficiency and
$d^*(F)$ maximal deficiency, using its notation.

\begin{corollary}[Deficiency zero]
\label{cor:autarky}
For complement-invariant clause-sets $F$ with $d_r^*(F)=0$, a quasi-maximal
autarky can be computed and satisfiability can be decided in polynomial time.
Equivalently, for clause-sets with $d^*(F)=0$, a quasi-maximal balanced autarky
can be computed and NAESAT can be decided in polynomial time.
\end{corollary}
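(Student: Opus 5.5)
The plan is to reduce Corollary~\ref{cor:autarky} to Corollary~\ref{cor:handbook} through the known Handbook reductions. Those reductions turn the computation of quasi-maximal (balanced) autarkies at deficiency zero into the functional SNS witness problem. The corollary is then the statement that the only missing algorithmic ingredient is now supplied.

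First, recall the chain of equivalences from \cite{Kullmann2003,Kullmann2007,FleischnerKullmannSzeider2002}. Complement-invariant clause-sets $F$ correspond to arbitrary clause-sets $F'$ with $F=F'\cup\overline{F'}$. Satisfying assignments and autarkies of $F$ correspond to NAE-assignments and balanced autarkies of $F'$. Under this correspondence $d_r^*(F)$ matches $d^*(F')$, so the two sentences of the corollary are equivalent and I will prove only the balanced version.

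Second, following the Handbook treatment of the deficiency-zero case, I will use the clause--variable sign matrix. Take a clause-set with $d^*(F)=0$; after the standard matching-based reduction, we may consider a subset of clauses and variables whose sign pattern is square. A balanced autarky corresponds to a nonzero vector in the kernel of some realization: each row of a row-balanced signed column scaling has both a positive and a negative contribution. By McCuaig's Theorem~4 \cite{McCuaig2004}, such a scaling exists exactly when the square pattern is non-SNS. If the pattern is SNS, the known theory says the empty autarky is quasi-maximal on that part. If it is non-SNS, Theorem~\ref{thm:main} gives $B\in\Q(A)$ and $z\ne0$ with $Bz=0$. The support and signs of $z$ define a partial assignment, and $Bz=0$ with $\sgn(B)=A$ forces every row touched by $z$ to receive both signs. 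So $z$ yields a nontrivial balanced autarky, computed in polynomial time. I will then apply this repeatedly, removing the clauses satisfied by the autarky and the variables it assigns. This preserves deficiency zero by the Handbook's lemma, and after at most $n$ rounds it gives a quasi-maximal balanced autarky. Once that autarky is computed, NAESAT is decided: $F$ is NAE-satisfiable iff the autarky covers all clauses.

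The main obstacle is the second step. I must check that the Handbook's reduction needs exactly a singular realization together with a kernel vector, and not some stronger object such as a full-support kernel vector. If full support is needed, I would restrict to the support of $z$ and iterate as above. Each round strictly shrinks the instance, so polynomial total time follows from Theorem~\ref{thm:main}.
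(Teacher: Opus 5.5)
Your opening paragraph is the paper's entire proof. The paper only notes that Corollary~\ref{cor:handbook} settles Conjecture 14.12.4, after which the claims are the implications recorded in Theorem 14.12.1 of \cite{KleineBuningKullmann2021}. The self-contained reconstruction you sketch afterwards has a genuine gap in its second step. The ``standard matching-based reduction'' to a square pattern on ``a subset of clauses and variables'' is never specified, and read literally it fails in both directions. If you drop clauses, a kernel vector $z$ of the sub-pattern need not yield a balanced autarky of $F$. An omitted clause containing exactly one variable from the support of $z$ is touched, yet it has only one assigned literal, so it cannot be balanced. If you keep all clauses and drop variables, zero-extension of $z$ is harmless, but SNS of the chosen square sub-pattern then certifies nothing. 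For example, $F=\{\{x,y\}\}$ has $d^*(F)=0$, and the column of $x$ alone is the SNS pattern $(+)$, yet $x\mapsto1,\,y\mapsto0$ is a balanced autarky. Hence ``the empty autarky is quasi-maximal on that part'' is not justified.

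The missing observation is that $d^*(F)=0$ means $c(F')\le n(F')$ for every $F'\subseteq F$. This therefore holds for every remainder $F_k$ in your iteration; it is just monotonicity of $d^*$, and no extra Handbook lemma is needed. There are two cases.

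If $c(F_k)<n(F_k)$, any nonzero rational kernel vector of the clause--variable matrix $M(F_k)$ itself gives a nontrivial balanced autarky by your sign argument. No SNS input is needed.

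If $c(F_k)=n(F_k)$, then $M(F_k)$ is square. Use both directions of the correspondence: a realization plus a kernel vector gives a balanced autarky, and a balanced autarky gives a row-balanced scaling, hence a singular realization. So $F_k$ has a nontrivial balanced autarky iff $M(F_k)$ is non-SNS. Theorem~\ref{thm:main} then either certifies that $F_k$ is balanced lean, and you stop, or returns $(B,z)$ whose sign vector is the next autarky.

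Only the signs of an arbitrary nonzero kernel vector are used, so the full-support worry in your last paragraph does not arise. With this dichotomy replacing the square sub-pattern step, your iteration of at most $c(F)$ rounds and your NAESAT criterion go through.
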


\begin{proof}
Corollary~\ref{cor:handbook} proves Conjecture 14.12.4. The claims are then
exactly the implications of Theorem 14.12.1 in
\cite{KleineBuningKullmann2021}, based on the matrix--SAT correspondence in
\cite{Kullmann2007}.
\end{proof}

In qualitative matrix analysis, Theorem~\ref{thm:main} supplies the natural
complement to an SNS guarantee: when the signs do not force nonsingularity, it
returns explicit polynomial-bit numerical data witnessing degeneracy. We make
no claim here for rectangular L-matrices or positive bounded deficiency; that
extension is Conjecture 14.12.5 in the Handbook.

\section{Conclusion}
\label{sec:conclusion}

Known even-cycle algorithms provide the combinatorial obstruction to SNS.
Term domination and coordinatewise affine interpolation turn that obstruction
into an exact integer singular realization, while exact elimination adds a
short null certificate. Extending certificate extraction from the square case
to rectangular patterns with a fixed excess of columns remains the principal
open direction.

\section*{Data availability}

No data were used for the research described in this article.

\section*{CRediT authorship contribution statement}

Tao Jiang: Conceptualization, Formal analysis, Investigation, Methodology,
Software, Validation, Writing -- original draft. Minbo Gao: Formal analysis,
Methodology, Validation, Writing -- review and editing. Shaowei Cai:
Conceptualization, Supervision, Validation, Writing -- review and editing.

\section*{Declaration of generative AI and AI-assisted technologies in the
manuscript preparation process}

During the preparation of this work, OpenAI Codex was used to assist with
manuscript organization, language editing, \LaTeX{} preparation, and
computational verification. The authors reviewed and edited the resulting
material, independently verified the mathematical arguments and cited
sources, and take full responsibility for the content of the article.


\bibliographystyle{elsarticle-num}
\bibliography{singular-witnesses}

\end{document}